\numberwithin{equation}{section}
\numberwithin{figure}{section}
\theoremstyle{plain}
\newtheorem{thm}{\protect\theoremname}
  \theoremstyle{plain}
  \newtheorem{prop}[thm]{\protect\propositionname}
  \theoremstyle{remark}
  \newtheorem{rem}[thm]{\protect\remarkname}
  \providecommand{\propositionname}{Proposition}
  \providecommand{\remarkname}{Remark}
\providecommand{\theoremname}{Theorem}
\begin{document}

\title{Exact renormalization group analysis of turbulent transport by
the shear flow}

\author{Weinan E$^1$
and Hao Shen$^2$
\bigskip
\\
\SMALL{$^1 $School of Mathematical Sciences and BICMR,  Peking University}
\\
\SMALL{Department of Mathematics and 
PACM, Princeton University,}
\\
\SMALL{e-mail: weinan@math.princeton.edu}
\smallskip
\\
\SMALL{$^2 $Program in Applied and Computational Mathematics,
Princeton University}
\\
\SMALL{e-mail: hshen@princeton.edu}
\\
}

\begin{abstract}
The exact renormalization group (RG) method
initiated by Wilson and further developed by Polchinski is
used to study the shear flow model
proposed by Avellaneda and Majda as a simplified model for the diffusive transport of 
a passive scalar by a turbulent velocity field. 
It is shown that this exact RG method is capable of recovering all
the scaling regimes as the spectral parameters of velocity statistics
vary, found by Avellaneda and Majda in their rigorous study of this model.
This gives further confidence that the RG
method, if implemented in the right way instead of using drastic truncations
as in the Yakhot-Orszag's approximate RG scheme, does give the correct
prediction for the large scale behaviors of solutions of stochastic 
partial differential equations (PDE).
We also derive the analog of the ``large eddy simulation'' models when
a finite amount of small scales are eliminated from the problem.
\end{abstract}

\maketitle

\section{Introduction}

The renormalization group method is the most powerful tool for studying the
infra-red and ultra-violet behavior of complex systems.  It has completely 
revolutionized the way we study the critical behavior of models in statistical
mechanics as well as the continuum limit of models in quantum field theory.
However, the application of RG to stochastic partial differential equations 
has yielded limited success. The most important example in this direction is
that of hydrodynamic turbulence. The hope has been that by applying RG on 
the Navier-Stokes equation with noise, we can predict the key features of turbulent
flows such as their energy spectrum, structure functions, etc.
In this regard, the most notable attempt has been that of Yakhot and Orszag
\cite{Yakhot_Renormalization_1986}, who conducted such a study and obtained explicit predictions 
by adopting a simplified version of the RG scheme developed earlier by 
Forster, Nelson and Stephen for stochastic PDEs \cite{Forster_Large_1977}.
However,  the validity of their work has been questioned ever since, and
alternative RG approaches have been proposed, see for example \cite{Eyink_Renormalization_1994,smith1992yakhot}.
The most definitive results are found in the work of Avellaneda and Majda:
They proposed and analyzed a class of turbulent diffusion models and compared
their rigorous results with that of the results obtained
using the Yakhot-Orszag's RG scheme on the same  problem.
They found that the Yakhot-Orszag scheme recovers the correct results only
in the so-called mean field regime and failed in all the other regimes.
This has cast further doubt on  the usefulness of the RG approach outside of 
statistical mechanics and quantum field theory.

We will argue in this note that the problem identified by Avellaneda and Majda
is due to the failure of the Yakhot-Orszag's approximate RG scheme, not the
RG method itself.  By adopting the exact RG method, advocated originally by
Wilson and further developed for field theory by Polchinski, we show that the
rigorous results of Avellaneda and Majda can be recovered for all the regimes.
Our work gives much needed confidence for the usefulness of the RG method
outside of its traditional domains of statistical mechanics and quantum field theory.

We remark that our intention is to illustrate the application of the exact
RG method, not to give another rigorous proof of the results of Avellaneda and Majda.
Therefore we will focus our attention on the exact RG formalism, instead of making
all of our statements rigorous.
However, it should be noted that there are no essential difficulties in adding all
the mathematical details needed to make things rigorous.


\section{The Avellaneda-Majda model}

In \cite{Avellaneda_Mathematical_1990} Avellaneda and Majda
proposed the following model for the diffusive transport of a passive scalar
by a turbulent velocity field:
\begin{equation}\label{eq:AMmodel}
\frac{\partial T^{\delta}}{\partial t}+v_{\delta}(x,t)\frac{\partial T^{\delta}}{\partial y}=\frac{1}{2}\nu_{0}\Delta T^{\delta}
\qquad T^{\delta}\big|_{t=0}=T_{0}(\delta x,\delta y)
\end{equation}
We will explain the notations in this model in a minute. But let us remark
immediately that the main simplifying feature of this model is that the velocity
field is a shear flow.
In addition, we assume the velocity field $v$ is a Gaussian random field with mean zero and 
\begin{equation}
\left\langle \left|\hat{v}_{\delta}(k)\right|^{2}\right\rangle =\sqrt{2\pi}1_{\delta\leq|k|\leq1}\left|k\right|^{1-\epsilon}\quad\mbox{(steady case)}
\end{equation}
\begin{equation}
\left\langle |\hat{v}_{\delta}(k,\omega)|^{2}\right\rangle =\sqrt{2\pi}1_{\delta\leq|k|\leq1}|k|^{1-\epsilon}\frac{|k|^{z}}{\omega^{2}+|k|^{2z}}\quad\mbox{(unsteady case)}
\end{equation}
Here $\hat{v}$ is the Fourier transform of $v$, 
$1_{\delta\leq|k|\leq1}$
is an indicate function which serves as a cut-off function with the infrared cutoff 
being $\delta>0$ and ultraviolet cut-off being $1$, $\nu_{0}$ is the bare diffucivity 
constant,  
$T_{0}$ is a given (smooth) function which serves as the initial condition.
When there is no danger of confusion, we omit the subscripts $\delta$
on $T$ and $v$.

The key parameters in this model are $\epsilon$ and $z$ which characterize
the spectral properties of the velocity field and satisfy the constraints
$-\infty<\epsilon<4$, $z\geq0$ 
(when $\epsilon\geq 4$, the infrared divergence is too severe that 
there is no way whatsoever to obtain a large-scale limit).
$\epsilon$ controls how energy is distributed in the scales.
In particular, if $\epsilon \ge 2$, the kinetic energy density is infinite due to 
the concentration of energy at the large scales. It is this feature that gives rise
to interesting scaling properties for this model.
$z$ measures how fast the velocity field decorrelates in time.
The well-known Kolmogorov spectrum corresponds to the values $\epsilon = 8/3, z=2/3$.

Let
\begin{equation}
\bar{T}(x,y,t)=\lim_{\delta\rightarrow0}\left\langle T^{\delta}
\left(\frac{x}{\delta},\frac{y}{\delta},\frac{t}{\delta^{\alpha}}\right)
\right\rangle \label{eq:main_question}
\end{equation}
The main questions are:
\begin{enumerate}
 \item For what values of $\alpha$ this limit exists and is non-trivial?
 \item Identify the effective model that governs the limiting $\bar{T}$.
\end{enumerate}

Avellaneda and Majda identified  three
scaling regimes in $\epsilon$ for the steady case and five regimes in $(\epsilon,z)$
in the unsteady case. In a second paper \cite{Avellaneda_Approximate_1992}, Avellaneda and
Majda  applied the Yakhot-Orszag's 
approximate RG method developed in \cite{Forster_Large_1977,Orszag_Analysis_1999,Yakhot_Renormalization_1986}
to study the same model for the unsteady case.
Contrast to their rigorous results,
they were only able to find three regimes in $(\epsilon,z)$.
Except for the mean field regime, the results of the approximate RG method
do not match that of the exact results.
The accompanying figure, essentially taken from \cite{Avellaneda_Approximate_1992},
summarizes the situation. The left figure is the phase diagram from the exact
results.  The right figure is the phase diagram predicted using the approximate RG.


\begin{center}\includegraphics[scale=0.5]{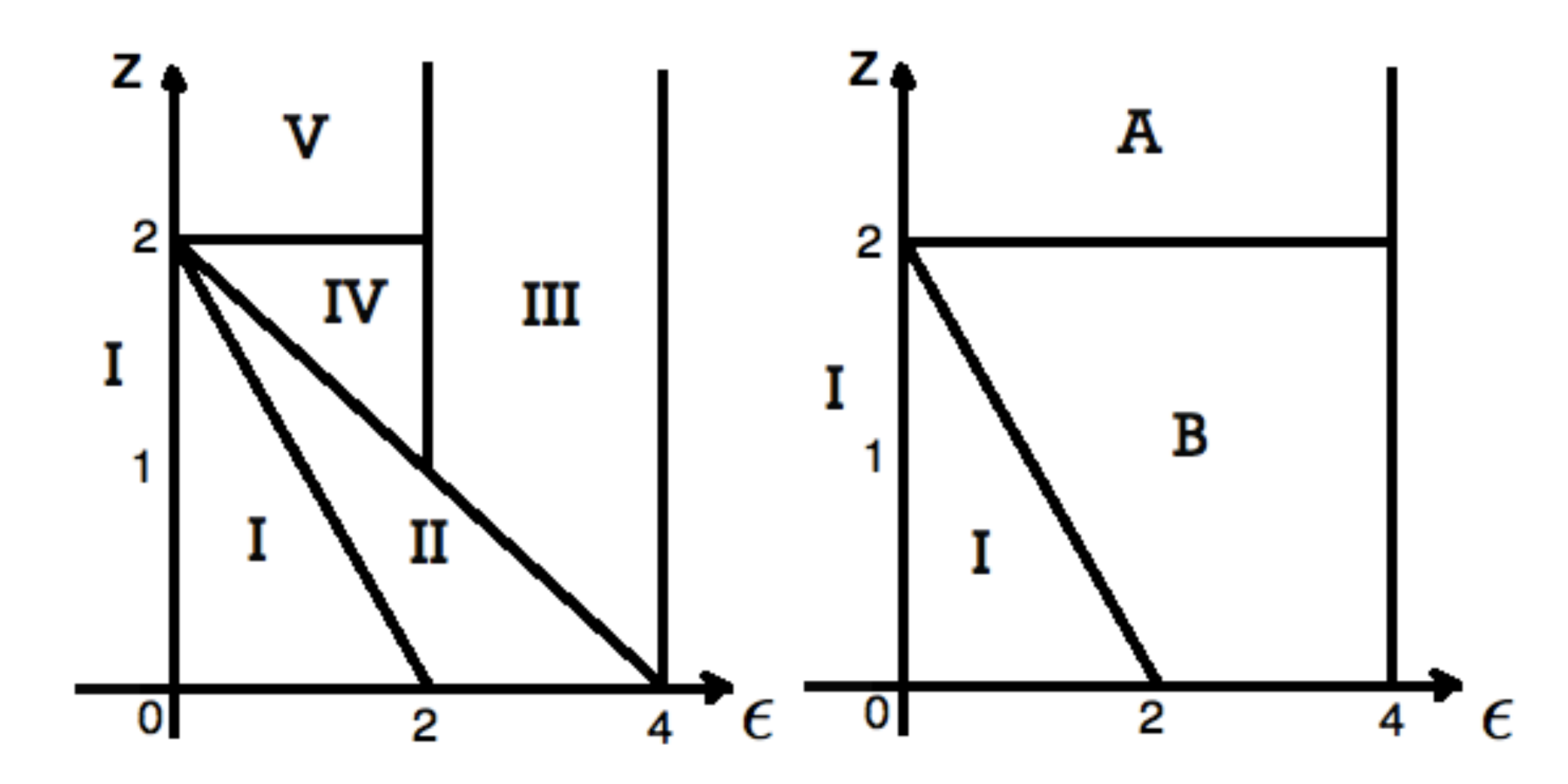}\end{center}

In this paper we show that the exact RG
method, inspired by \cite{Polchinski_Renormalization_1984,Wilson_Renormalization_1974}
is capable of recovering all the scaling regimes with correct scalings.
The method in \cite{Polchinski_Renormalization_1984,Wilson_Renormalization_1974}
was developed to study the (Euclidean) quantum field theories, which typically
consist of a Gaussian field $\phi$ on $\mathbb{R}^{d}$ with distribution
formally written as $e^{-\frac{1}{2}\int(\partial\phi(x))^{2}d^{d}x}$,
and a functional in $\phi$ that is written in a form $e^{\mathcal{V}(\phi)}$,
for instance $\mathcal{V}(\phi)=-\frac{1}{4}\int\phi(x)^{4}d^{d}x$.
An untraviolet cutoff $\Lambda_{0}$ is needed to make sense of the
functional.  For the RG scheme, one starts by  decomposing $\phi$ into two parts
corresponding to slow and fast fluctuation:
$\hat{\phi}(k)=\hat{\phi}_{<}(k)+\hat{\phi}_{>}(k)$
separated by a scale $k\sim e^{-l}\Lambda_{0}$ and average out $e^{\mathcal{V}(\phi)}$
w.r.t. $\hat{\phi}_{>}$.
The resulting density, denoted by 
$\left\langle e^{\mathcal{V}(\phi)}\right\rangle _{\phi_{>}}$,
is then  rescaled: $k\rightarrow e^{-l}k$. In the context of
quantum field theory, Polchinski \cite{Polchinski_Renormalization_1984}
was able to write down an exact evolution equation which describes
the dynamics of $\left\langle e^{\mathcal{V}(\phi)}\right\rangle _{\phi_{>}}$
as $l$ changes. 

To apply similar ideas to our model, we regard $T(x,y,t)$ as a
functional of the Gaussian field $v$ for each $x,y,t$. We then exploit
the special structure in our model to study the resulting Polchinski type of equation.

To begin with, 
we Fourier transform $T$ in $y$, to obtain a model for $\hat{T}(x,\xi,t)$,
\begin{equation}
\begin{aligned} & \frac{\partial\hat{T}}{\partial t}+iv(x,t)\xi\hat{T}=-\frac{1}{2}\nu_{0}\xi^{2}\hat{T}+\frac{1}{2}\nu_{0}\partial_{x}^{2}\hat{T}\end{aligned}
\end{equation}
This form allows us to apply the Feynman-Kac formula
to obtain
\begin{equation}
\hat{T}(x,\xi,t)=\mathbb{E}\bigg[e^{-\frac{\nu_{0}}{2}\xi^{2}t}e^{-i\xi\int_{0}^{t}v(x+\sqrt{\nu_{0}}B_{s},t-s)ds}T\big|_{t=0}(x+\sqrt{\nu_{0}}B_{t},\xi)\bigg]\label{eq:init_FK}
\end{equation}
where $B$ is a standard Brownian motion on $\mathbb{R}$ initiated at the origin
and is independent of $v$. We will always write $\mathbb{E}$
for expectation over $B$ and $\left\langle -\right\rangle $ for
expectation over $v$. 

\section{Steady case}

\subsection{The Polchinski equation}

Write
\begin{equation}
v(x)=v_{<}(x)+v_{>}(x)
\end{equation}
where 
\begin{equation}
\left\langle \left|\hat{v}_{>}(k)\right|^{2}\right\rangle =\sqrt{2\pi}1_{e^{-l}<|k|\leq1}\left|k\right|^{1-\epsilon}
\end{equation}
\begin{equation}
\left\langle \left|\hat{v}_{<}(k)\right|^{2}\right\rangle =\sqrt{2\pi}1_{\delta\leq|k|\leq e^{-l}}\left|k\right|^{1-\epsilon}
\end{equation}
Let $\bar{T}_{l}(x,\xi,t;v_{<})=\langle \hat{T}(x,\xi,t)\rangle_{v_{>}}$
be the average of $\hat{T}(x,\xi,t)$ over $v_{>}$. 

\begin{prop}
\label{prop:pol}$\bar{T}_{l}(x,\xi,t;v_{<})$ satisfies the 
equation (\cite{Polchinski_Renormalization_1984})
\begin{equation}
\frac{\partial\bar{T}_{l}(x,\xi,t;v_{<})}{\partial l}=\int\int\frac{\partial}{\partial l}C_{l}(x^{\prime}-y^{\prime})\frac{\delta^{2}\bar{T}_{l}(x,\xi,t;v_{<})}{\delta v_{<}(x^{\prime})\delta v_{<}(y^{\prime})}dx^{\prime}dy^{\prime}
\label{eq:Pol_steady}
\end{equation}
for every $x,\xi,t$, where the right hand side involves functional derivatives of $\bar{T}_{l}$
w.r.t. $v_{<}$, and 
\begin{equation}
C_{l}(z)=\int_{e^{-l}}^{1}e^{izk}\left|k\right|^{1-\epsilon}dk
\end{equation}
and the initial condition at $l=0$ is
\begin{equation}
\bar{T}_{0}(x,\xi,t;v)=\hat{T}(x,\xi,t)=\mathbb{E}\bigg[e^{-\frac{\nu_{0}}{2}\xi^{2}t}e^{-i\xi\int_{0}^{t}v(x+\sqrt{\nu_{0}}B_{s})ds}T\big|_{t=0}(x+\sqrt{\nu_{0}}B_{t},\xi)\bigg]\label{eq:init_FK-1}
\end{equation}
\end{prop}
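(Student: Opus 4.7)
The plan is to derive the Polchinski equation as a differentiated form of the Gaussian convolution semigroup on field space. Since $v_<$ and $v_>$ have disjoint Fourier supports $[\delta, e^{-l}]$ and $(e^{-l}, 1]$, they are independent Gaussian fields, and the covariance of $v_>$ is precisely the $C_l$ defined in the statement (up to the normalization constants appearing in $\langle |\hat v|^2\rangle$). Viewing the Feynman--Kac expression (\ref{eq:init_FK}) as a functional $F[v]$ of the full velocity, we write
\begin{equation*}
\bar T_l(x,\xi,t;v_<) \;=\; \int F[v_< + v_>]\,d\mu_{C_l}(v_>),
\end{equation*}
where $d\mu_{C_l}$ is the centered Gaussian measure with covariance $C_l$.

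The central technical input is the classical convolution identity for Gaussian measures: for a sufficiently nice functional $F$,
\begin{equation*}
\int F[v_< + v_>]\,d\mu_{C_l}(v_>) \;=\; \Bigl[\exp\Bigl(\tfrac{1}{2}\!\int\!\!\int C_l(x'-y')\,\tfrac{\delta^{2}}{\delta v(x')\,\delta v(y')}\,dx'dy'\Bigr) F[v]\Bigr]_{v=v_<}.
\end{equation*}
This is most easily verified by checking both sides agree on exponential test functionals $F[v]=\exp(\int J v)$ (the generating functional argument), or equivalently by expanding in Wick monomials. Once this representation is in hand, differentiating with respect to $l$ commutes the $\partial_l$ through the exponential, producing one factor of $\partial_l C_l$ contracted with two functional derivatives acting on $\bar T_l$ itself. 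This yields exactly (\ref{eq:Pol_steady}), up to the convention for the overall numerical prefactor. The initial condition is automatic: at $l=0$ one has $e^{-l}=1$, so $C_0\equiv 0$, the averaging over $v_>$ is trivial, and $\bar T_0=\hat T$ as claimed in (\ref{eq:init_FK-1}).

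The main obstacle is the justification of the Gaussian convolution identity in our setting, since the functional $F[v]$ is the exponential of an unbounded linear functional of $v$ (the stochastic integral $\int_0^t v(x+\sqrt{\nu_0}B_s)\,ds$) averaged over an independent Brownian motion. Formally the identity is immediate from the generating-functional calculation, and the ultraviolet cutoff $|k|\le 1$ together with the Gaussianity of $v$ ensures the relevant second moments are finite, so interchanging $\mathbb{E}$ (Brownian) with $\langle\cdot\rangle_{v_>}$ and with $\partial_l$ poses no essential difficulty; a fully rigorous treatment would amount to standard Gaussian analysis, which in line with the stated goals of the paper we do not carry out in detail. The remaining steps (verifying independence of $v_<,v_>$ and the identification of $\partial_l C_l$ as the instantaneous covariance increment of the integrated-out mode shell) are routine.
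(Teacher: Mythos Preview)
Your proposal is correct and takes essentially the same approach as the paper: both derive (\ref{eq:Pol_steady}) from the fact that convolution with the Gaussian measure $\mu_{C_l}$ is the heat semigroup on field space with covariance $C_l$. The only cosmetic difference is that the paper first shows the density $\mu_l$ itself satisfies the functional heat equation and then transfers this to $\bar T_l$ via the convolution representation $\bar T_l(\cdot;v)=\int \hat T(\cdot;u)\,d\mu_l(v-u)$, whereas you invoke the equivalent operator identity $\int F[v_<+v_>]\,d\mu_{C_l}(v_>)=\exp\bigl(\tfrac12\!\int\!\!\int C_l\,\tfrac{\delta^2}{\delta v\,\delta v}\bigr)F\big|_{v=v_<}$ and differentiate in $l$; these are two phrasings of the same computation.
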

Notice that here $\bar{T}$ is viewed as a function or functional of $l$ and $v_{<}$,
with $(x, \xi, t)$ as parameters.


\begin{proof}
Let $\mu_{l}$ be the mean zero Gaussian 
measure defined on the space of smooth functions $C^{\infty}$ with
$C_{l}$ as covariance (the mean zero condition and the covariance uniquely determined the Gaussian measure). In other words we have
\begin{equation}
\left\langle \left|\hat{v}(k)\right|^{2}\right\rangle_{\mu_l} =\sqrt{2\pi}1_{e^{-l}<|k|\leq1}\left|k\right|^{1-\epsilon}
\end{equation}
for all the random variable $C^{\infty}\ni v \mapsto \hat{v}(k)$.
Recall the elementary fact that the Gaussian density $\rho=\frac{1}{\mathcal{N}}\exp{(-\frac{1}{2}X^T c^{-1}(t) X)}$ satisfies the heat equation $\partial_t \rho=\dot{c}\Delta\rho$
where $\dot{c}=\partial_t c$ and $\mathcal{N}$ is a normalization factor so that $\int_{\{X\}}\rho=1$ (it can be easily checked by Fourier transform in $X$).
Therefore, 
\begin{equation}
\frac{\partial\mu_{l}(v)}{\partial l}=\int\int\frac{\partial}{\partial l}C_{l}(x^{\prime}-y^{\prime})\frac{\delta^{2}\mu_{l}(v)}{\delta v(x^{\prime})\delta v(y^{\prime})}dx^{\prime}dy^{\prime}
\end{equation}
Furthermore let us notice that when $l=0$, $v$ is almost surely zero w.r.t. 
$\mu_{l=0}$, i.e. $\mu_{l=0}$ is concentrated at $v=0$.
Therefore $\mu_{l}$ is the fundamental solution of the above heat equation
with $\frac{\partial}{\partial l}C_{l}(x^{\prime}-y^{\prime})$ as
coefficient of the Laplacian. 

Since  $\mu_{l}(v)=\mu_{l}(-v)$, we have
\begin{equation}
\begin{aligned}
& \left\langle \hat{T}(x,\xi,t;v)\right\rangle _{v_{>}}
=\int\hat{T}(x,\xi,t;v+u)d\mu_l(u) \\
& =\int\hat{T}(x,\xi,t;u)d\mu_l(v-u)
\end{aligned}
\end{equation}
Therefore 
$\left\langle \hat{T}(x,\xi,t;v)\right\rangle _{v_{>}}$
solves the same heat equation
\begin{equation}
\frac{\partial\bar{T}_{l}(x,\xi,t;v)}{\partial l}=\int\int\frac{\partial}{\partial l}C_{l}(x^{\prime}-y^{\prime})\frac{\delta^{2}\bar{T}_{l}(x,\xi,t;v)}{\delta v(x^{\prime})\delta v(y^{\prime})}dx^{\prime}dy^{\prime}
\end{equation}
with initial data at $l=0$ given by $\hat{T}(x,\xi,t;v)$.
Since $\bar{T}_{l}(x,\xi,t;v)$ 
actually only depends on the low modes $v_{<}$ of $v$, the Proposition is proved.
\end{proof}


Motivated by \eqref{eq:init_FK-1}, we seek solutions of \eqref{eq:Pol_steady}
in the form:
\begin{equation}
\bar{T}_{l}(x,\xi,t;v_{<})=\mathbb{E}\bigg[e^{\sum_{n\geq0}U_{n}(x,\xi,t;v_{<},B,l)}T\big|_{t=0}(x+\sqrt{\nu_{0}}B_{t},\xi)\bigg]
\end{equation}
where $U_{0}$ is independent of $v_{<}$, $U_{1}$ is linear in $v_{<}$,
etc. and $U_{n}$ is $n$-th order in $v_{<}$.
Substituting into \eqref{eq:Pol_steady} and comparing orders in $v_{<}$, 
we obtain a system of equations
\begin{equation}
\frac{\partial U_{n}}{\partial l}=\int\int\frac{\partial}{\partial l}C_{l}(x^{\prime}-y^{\prime})\left\{ \frac{\delta^{2}U_{n+2}}{\delta v_{<}(x^{\prime})\delta v_{<}(y^{\prime})}+\sum_{p+q=n+2}\frac{\delta U_{p}}{\delta v_{<}(x^{\prime})}\frac{\delta U_{q}}{\delta v_{<}(y^{\prime})}\right\} dx^{\prime}dy^{\prime}\label{eq:Un}
\end{equation}
The initial conditions at $l=0$ are:
$U_0= -\frac{\nu_{0}}{2}\xi^{2}t, U_1=-i\xi\int_{0}^{t}v(x+\sqrt{\nu_{0}}B_{s})ds,
 U_{n}=0$ for $n\geq2$.
By inspection, we have $\frac{\partial}{\partial l}U_{n}=0$ for $n\geq1$ and
$U_{n}=0$ for $n\geq2$,
\begin{equation}
U_{1}(x,\xi,t;v_{<},B,l)=-i\xi\int_{0}^{t}v_{<}(x+\sqrt{\nu_{0}}B_{s})ds\label{eq:U1}
\end{equation}
for all $l\geq0$. 
$U_0$ satisfies
\begin{equation}
\frac{\partial U_{0}}{\partial l}=-\xi^{2}\int\int\frac{\partial}{\partial l}C_{l}(x^{\prime}-y^{\prime})\frac{\delta\int_{0}^{t}v_{<}(x+\sqrt{\nu_{0}}B_{s})ds}{\delta v_{<}(x^{\prime})}\frac{\delta\int_{0}^{t}v_{<}(x+\sqrt{\nu_{0}}B_{s})ds}{\delta v_{<}(y^{\prime})}dx^{\prime}dy^{\prime}\label{eq:dUdllong}
\end{equation}
i.e.
\begin{equation}
\frac{\partial U_{0}}{\partial l}=-\xi^{2}\int_{0}^{t}\int_{0}^{t}\frac{\partial}{\partial l}C_{l}(\sqrt{\nu_{0}}(B_{s}-B_{s^{\prime}}))dsds^{\prime}\label{eq:dUdlshort}
\end{equation}

Next, we come to the rescaling step.
Formally, we can view the distribution of $\bar{T}_{l}(x,\xi,t)$
as superpositions (over Brownian paths) of quantum field theoretic measures:
\begin{equation}
\exp\left\{ -\frac{1}{2}\int\left(v_{<}(x^{\prime})(-i\partial)^{\epsilon-1}v_{<}(x^{\prime})\right)dx^{\prime}+U_{1}(l)+U_{0}(l)\right\} 
\end{equation}
with
\begin{equation}
\frac{1}{2}\int\left(v(x^{\prime})(-i\partial)^{\epsilon-1}v(x^{\prime})\right)dx^{\prime}=\int_{\delta}^{e^{-l}}|k|^{\epsilon-1}|\hat{v}_{<}(k)|^{2}dk\label{eq:quadratic}
\end{equation}
With the rescaling $x\rightarrow e^{l}x$, $t\rightarrow e^{\alpha l}t$,
the scaling exponent for $v$ can be found by requiring that the quadratic term be preserved.
This gives 
\begin{equation}
v_{<}(e^{l}x)\rightarrow e^{(\epsilon/2-1)l}v(x)
\end{equation}

Let
\begin{equation}
V_{1}(x,\xi,t;v,B,l)=U_{1}(e^{l}x,e^{-l}\xi,e^{\alpha l}t;e^{(\epsilon/2-1)l}v_{<}(e^{-l}\cdot),B,l)\label{eq:defV1}
\end{equation}
\begin{equation}
V_{0}(x,\xi,t;B,l)=U_{0}(e^{l}x,e^{-l}\xi,e^{\alpha l}t;B,l)+\frac{\nu_{0}}{2}e^{(\alpha-2)l}\xi^{2}t\label{eq:defV0}
\end{equation}
Note that we have separated out the initial condition $-\frac{\nu_{0}}{2}\xi^{2}t$
from $U_{0}$, as a matter of convenience. From (\ref{eq:U1}) (\ref{eq:dUdlshort}),
we obtain  the RG flow equation after rescaling is incorporated:
\begin{equation}
\begin{cases}
\frac{\partial V_{1}}{\partial l}=(\alpha+\frac{\epsilon}{2}-2)V_{1}+(\frac{\alpha}{2}-1)\int\frac{\delta V_{1}}{\delta B_{s}}B_{s}ds\\
\frac{\partial V_{0}}{\partial l}
=(2\alpha-2)V_{0}
-e^{(2\alpha-2)l}\xi^{2}\int_{0}^{t}\int_{0}^{t}
\frac{\partial}{\partial l}\bigg[C_{l}\left(\sqrt{\nu_0}(B_{s}-B_{s^{\prime}})e^{\frac{\alpha}{2}l}\right)\bigg]dsds^{\prime}
\end{cases}\label{eq:U0U1}
\end{equation}

\subsection{Fixed points of the RG flow}

Before discussing fixed points for the RG flow, we make a
general remark concerning the infrared cut-off. 

The infrared cut-off $\delta$ is a technical device needed in order to
guarantee that the stochastic PDE for the passive scalar is well defined.
However, it prevents us from being able to take the long time
limit for the RG flow. In fact, with the infrared cut-off,
the RG flow stops at the scale $l=-\log\delta$. 

Nevertheless, we can still study the fixed points in the following sense. We started from the dynamical system (\ref{eq:Pol_steady}), which stops at the scale $l=-\log\delta$ due to the infrared cut-off as discussed above, and then we have reduced (\ref{eq:Pol_steady}) to a dynamical system of the form
(\ref{eq:U0U1}).
One of the advantages among others of this reduction is that the dynamical system 
(\ref{eq:U0U1}) for $V_1,V_0$ doesn't really need an infrared cut-off in order to be well-defined.
In other words,
(\ref{eq:U0U1}) itself actually exists as $l\rightarrow\infty$. Therefore,
suppose that $(V_{1}^{\star},V_{0}^{\star})$ is
a fixed point and $(V_{1}(l),V_{0}(l))\rightarrow(V_{1}^{\star},V_{0}^{\star})$
as $l\rightarrow\infty$,
we will have (under certain suitable norm) $\left\Vert (V_{1}(l),V_{0}(l))-(V_{1}^{\star},V_{0}^{\star})\right\Vert <a(l)$
where $a(l)\rightarrow0$ as $l\rightarrow\infty$. Consequently we will obviously have
\begin{equation}
\left\Vert (V_{1}(-\log\delta),V_{0}(-\log\delta))-(V_{1}^{\star},V_{0}^{\star})\right\Vert <a(-\log\delta)
\end{equation}
Once we have this bound, we will then go back to (\ref{eq:Pol_steady}) and have estimate on
$\bar{T}_l$ at $l=-\log\delta$.
We will then let $\delta\rightarrow0$ (see
(\ref{eq:main_question})).


\subsubsection{The mean field regime $\epsilon<0$}

With diffusive scaling, i.e. $\alpha=2$, we see that 
\begin{equation}
\frac{\partial V_{1}}{\partial l}=\frac{\epsilon}{2}V_{1}
\end{equation}
and this together with $\epsilon<0$ implies $V_{1}\rightarrow0$.
This implies that in the mean field regime, the fixed points are
\begin{equation}
e^{-\frac{1}{2}\int\left(v(x^{\prime})(-i\partial)^{\epsilon-1}
v(x^{\prime})\right)dx^{\prime}+V_{0}}
\end{equation}
for all dimensionless function $V_{0}$, for instance $V_{0}=D\xi^{2}t$,
$D\in\mathbb{R}$. 

Next we ask: Which specific fixed point (i.e. $V_{0}=?$) the RG flow
converges to starting from $U_{1},U_{0}$? 
Solving equation (\ref{eq:dUdlshort})
with $U_{0}(l=0)=-\frac{\nu_{0}}{2}\xi^{2}t$,
we obtain
\begin{equation}
U_{0}(l)=-\frac{\nu_{0}}{2}\xi^{2}t-\xi^{2}\int_{0}^{t}\int_{0}^{t}C_{l}(\sqrt{\nu_{0}}(B_{s}-B_{s^{\prime}}))dsds^{\prime}\label{eq:ergodU}
\end{equation}
Therefore
\begin{equation}
V_{0}(l)=-e^{-2l}\xi^{2}\int_{0}^{e^{2l}t}\int_{0}^{e^{2l}t}C_{l}(\sqrt{\nu_{0}}(B_{s}-B_{s^{\prime}}))dsds^{\prime}\label{eq:ergod}
\end{equation}
Using ergodicity
arguments, the last term in the above equation converges to 
\begin{equation}
\frac{2}{\pi\nu_{0}}t\xi^{2}\int_{0}^{1}|k|^{-1-\epsilon}dk
=-\frac{2t\xi^{2}}{\pi\nu_{0}\epsilon}
\end{equation}
as $l\rightarrow\infty$. In fact, following \cite{Avellaneda_Mathematical_1990},
one can construct a process $X$ such that $X_{>}^{\prime\prime}=-\frac{2}{\nu_{0}}v_{>}$,
and by Ito's formula, $\xi\int_{0}^{t}v_{>}(\sqrt{\nu_{0}}B_{s})ds=\sqrt{\nu_{0}}\xi\int_{0}^{t}X_{>}^{\prime}(\sqrt{\nu_{0}}B_{s})dB_{s}+R_{>}(t)$
where it can be shown that $R_{>}(t)$ upon rescaling goes to $0$
as $l\rightarrow\infty$. The Ito integral is a martingale with quadratic
variation $\nu_{0}\xi^{2}\int_{0}^{t}X_{>}^{\prime}(\sqrt{\nu_{0}}B_{s})^{2}ds$,
which upon rescaling and by ergodicity theorem goes to $\nu_{0}\xi^{2}t\left\langle X^{\prime}(0)^{2}\right\rangle =\frac{2}{\pi\nu_{0}}t\xi^{2}\int_{0}^{1}|k|^{-1-\epsilon}dk$.
The claim above follows by observing that the correlation
of $\xi\int_{0}^{t}v_{>}(\sqrt{\nu_{0}}B_{s})ds$ is given by the
last term of (\ref{eq:ergodU}).

This argument also gives an effective equation  for $\bar{T}$:
\begin{equation}
\frac{\partial\bar{T}}{\partial t}
=\frac{1}{2}\nu_{0}\Delta\bar{T}-\frac{2}{\pi\nu_{0}\epsilon}\frac{\partial^2\bar{T}}{\partial y^2}
\end{equation}
Since $\epsilon<0$ the coefficient of $\partial^2/\partial y^2$ is enhanced.

\subsubsection{Fixed points with non-vanishing limiting $V_{1}$: 
the regime $2<\epsilon<4$}

We have seen that in the mean field regime, $V_{1}^{\star}=0$ at the fixed point.
We now look for a scaling such that the fixed points have linear terms in
$v$ (i.e. $V_{1}^{\star}\neq0$), for $\epsilon>0$. From (\ref{eq:U0U1}),
$\partial_l V_{1}^{\star}=0$ can be guaranteed by
\begin{equation}
\alpha=2-\frac{\epsilon}{2}
\end{equation}
and 
\begin{equation}
\int\frac{\delta V_{1}^{\star}}{\delta B_{s}}B_{s}ds=0
\end{equation}
The second condition implies that $V_{1}^{\star}$ does not depend on $B$. In fact
\begin{equation}
V_{1}^{\star}=-i\xi\int_{0}^{t}v(x)ds=-i\xi tv(x)
\end{equation}

Now with $V_{1}^{\star}$ already found, we identify the constant
term in the fixed point, namely $V_{0}^{\star}$. By (\ref{eq:U0U1})
with $B=0$
\begin{equation}
\frac{\partial V_{0}}{\partial l}=(2\alpha-2)V_{0}-\frac{1}{\pi}e^{(2\alpha-2)l}\xi^{2}\int_{0}^{t}\int_{0}^{t}\frac{\partial}{\partial l}\int_{e^{-l}}^{1}\left|k\right|^{1-\epsilon}dkdsds^{\prime}
\end{equation}
The right hand side being equal to $0$ implies that
\begin{equation}
(2\alpha-2)V_{0}^{\star}-\frac{1}{\pi}e^{(2\alpha-2)l}\xi^{2}\int_{0}^{t}\int_{0}^{t}e^{-l(2-\epsilon)}dsds^{\prime}=0
\end{equation}
Using the scaling we found above $2\alpha-2=2-\epsilon$, we get
\begin{equation}
(2\alpha-2)V_{0}^{\star}-\frac{1}{\pi}\xi^{2}t^{2}=0
\end{equation}
If $\epsilon > 2$, then we have
\begin{equation}
V_{0}^{\star}=\frac{1}{\pi}\frac{\xi^{2}t^{2}}{2-\epsilon}=-\frac{1}{\pi}\xi^{2}t^{2}\int_{1}^{\infty}\left|k\right|^{1-\epsilon}dk
\end{equation}
This is the {\it hyperscaling regime} identified in \cite{Avellaneda_Mathematical_1990}.
Note that the term $-\frac{\nu_{0}}{2}\xi^{2}t$
in $U_{0}$ goes to zero under the scaling $\alpha=2-\frac{\epsilon}{2}$.

For $0<\epsilon<2$, $\frac{1}{\pi}\frac{\xi^{2}t^{2}}{2-\epsilon}$
is not the fixed point that the RG flow converges to, starting
from our initial condition $V_{1}(0),V_{0}(0)$.
To see this, we examine the effects of integration and of rescaling separately.
Integrating out $e^{-l}<|k|<1$ gives an increment
\begin{equation}
\begin{aligned}
-\frac{1}{\pi}\xi^{2}t^{2}\int_{1}^{\infty}\left|k\right|^{1-\epsilon}dk
&\rightarrow-\frac{1}{\pi}\xi^{2}t^{2}\int_{1}^{\infty}\left|k\right|^{1-\epsilon}dk-\frac{1}{\pi}\xi^{2}t^{2}\int_{e^{-l}}^{1}\left|k\right|^{1-\epsilon}dk\\
&=-\frac{1}{\pi}\xi^{2}t^{2}\int_{e^{-l}}^{\infty}\left|k\right|^{1-\epsilon}dk
\end{aligned}
\end{equation}
Rescaling under the change of variable $k\rightarrow e^{-l}k$ gives 
\begin{equation}
-\frac{1}{\pi}e^{(2\alpha-2)l}\xi^{2}t^{2}\int_{1}^{\infty}\left|k\right|^{1-\epsilon}e^{(\epsilon-2)l}dk=-\frac{1}{\pi}\xi^{2}t^{2}\int_{1}^{\infty}\left|k\right|^{1-\epsilon}dk
\end{equation}
If $0<\epsilon<2$, solving (\ref{eq:dUdllong})(\ref{eq:dUdlshort})
results in
\begin{equation}
U_{0}(l)=U_{0}(0)-\frac{1}{\pi}\xi^{2}\int_{0}^{t}\int_{0}^{t}\int_{e^{-l}}^{1}\left|k\right|^{1-\epsilon}e^{ik(B_{s}-B_{s^{\prime}})}dkdsds^{\prime}\label{eq:hyperU0}
\end{equation}
with $U_{0}(0)=-\frac{\nu_{0}}{2}\xi^{2}t$. Upon rescaling,
the effect of the Brownian motion disappears and we obtain
$V_{0}(l)\rightarrow-\frac{1}{\pi}\xi^{2}t^{2}\int_{1}^{\infty}\left|k\right|^{1-\epsilon}dk=-\infty$
for $0<\epsilon<2$.

Unlike the mean field regime where the constant term in $v$ 
can be an arbitrary dimensionless function at the fixed point, here the linear term determines
uniquely the constant term. Indeed in the
mean field regime, which fixed point the RG flow converges to depends on the initial
condition ($\nu_{0}$), but in the hyperscaling regime $2<\epsilon<4$,
the initial condition ($\nu_{0}$) does not affect the infrared behavior.

To find the effective equation for $\bar{T}$ (see (\ref{eq:main_question})), we
first take an infrared cutoff $\delta$ for $v$, run the RG flow until
$e^{-l}=\delta$ to obtain
\begin{equation}
V_{0}(e^{-l}=\delta)
=-\frac{1}{\pi}\xi^{2}t^{2}\int_{1}^{e^{l}}\left|k\right|^{1-\epsilon}dk +o(1)
\end{equation}
which is very close to $V_{0}^{\star}$.
Here the $o(1)$ term is a correction term due to the effect of the Brownian motion.
Then we take $\delta\rightarrow0$.
In this way, we recover the effective model derived in \cite{Avellaneda_Mathematical_1990}:
\begin{equation}
\frac{\partial \bar{T}}{\partial t} =   \frac{t}{\pi} \int_{1}^{\infty}\left|k\right|^{1-\epsilon}dk \frac{\partial^2 \bar{T}}{\partial y^2} 
\end{equation}


\subsubsection{The nonlocal regime }

Now we consider the case when  $0<\epsilon<2$.  We have
\begin{equation}
U_{0}(l)=U_{0}(0)
-\frac{1}{\pi}\xi^{2}t^{2}
\int_{0}^{1}\int_{0}^{1}\int_{e^{-l}}^{1}
e^{i(B_{s}-B_{s^{\prime}})\sqrt{\nu_0 t}k}\left|k\right|^{1-\epsilon}dkdsds^{\prime}
\end{equation}
with $U_{0}(0)=-\frac{\nu_{0}}{2}\xi^{2}t$. The last term under rescaling
$x\rightarrow e^{l}x$, $t\rightarrow e^{\alpha l}t$ becomes 
\begin{equation}
-\frac{1}{\pi}\xi^{2}t^{2}e^{2(\alpha-1)l}
\int_{0}^{1}\int_{0}^{1}\int_{e^{-l}}^{1}e^{i(B_{s}-B_{s^{\prime}})e^{\alpha l/2}\sqrt{\nu_0 t}k}\left|k\right|^{1-\epsilon}dkdsds^{\prime}
\end{equation}
With a change of variable for $k$, $k\rightarrow e^{-\alpha l/2}(\nu_0 t)^{-1/2}k$,  we have
\begin{equation}
\begin{aligned} 
& -\frac{1}{\pi}e^{(\alpha+\alpha\epsilon/2-2)l}\xi^{2}\nu_0^{\epsilon_0/2-1}t^{1+\epsilon/2}\int_{0}^{1}\int_{0}^{1}\int_{e^{(\alpha/2-1)l}\sqrt{t}}^{e^{\alpha l/2}\sqrt{t}}e^{i(B_{s}-B_{s^{\prime}})k}\left|k\right|^{1-\epsilon}dkdsds^{\prime}
\end{aligned}
\end{equation}
Let $\alpha=\frac{2}{1+\epsilon/2}$, we obtain a nontrivial
fixed point as $l\rightarrow\infty$
\begin{equation}\label{steady_V0ast}
V_{0}^{\star}=-\frac{1}{\pi}\xi^{2}\nu_0^{\epsilon_0/2-1}t^{1+\epsilon/2}\int_{0}^{1}\int_{0}^{1}\int e^{i(B_{s}-B_{s^{\prime}})k}\left|k\right|^{1-\epsilon}dkdsds^{\prime}
\end{equation}
Notice that $U_{0}(0)=-\frac{\nu_{0}}{2}\xi^{2}t$ vanishes in the limit. From \cite{Avellaneda_Mathematical_1990}, Proposition 4.1, we know that the right hand
side of the above equation is well-defined a.s. with respect to Wiener measure of $B$.

In this regime, unlike the mean field or the hyperscaling regime discussed above, now the fixed point $V_0^{\ast}$ depends on the Brownion motion $B$. The averaged quantity 
\begin{equation}\label{steady_nonlocalTbar}
\bar{T}(x,\xi,t)=\mathbb{E}\bigg[
e^{V_0^{\ast}(x,\xi,t;B)}
T\big|_{t=0}(x+\sqrt{\nu_{0}}B_{t},\xi)\bigg]
\end{equation}
will satisfy a non-local effect PDE. For a more general discussion on this issue, see Prop \ref{prop:nonlocal_effPDE} in Section \ref{sec:Effective-SPDE}. 
We  observe (\ref{steady_V0ast}) (\ref{steady_nonlocalTbar}): if 
\begin{equation}\label{eq:steady_alpha}
-\frac{1}{\pi}\int_{0}^{1}\int_{0}^{1}\int e^{i(B_{s}-B_{s^{\prime}})k}\left|k\right|^{1-\epsilon}dkdsds^{\prime}=\alpha
\end{equation}
we would have an effective equation
\begin{equation}
\frac{\partial \bar{T}_{\alpha}}{\partial t} = D(\nu_0) \alpha t^{\epsilon/2}  \frac{\partial^2 \bar{T}_{\alpha}}{\partial y^2} 
\end{equation}
for some constant $D(\nu_0)$. Here $\alpha$ depends on $B$, so one has to define a measure $\nu(\alpha)$ on $\mathbb{R}$ as the push-forward measure of the Wiener measure via (\ref{eq:steady_alpha}), and then it holds that
\begin{equation}
\bar{T}(x,y,t)=\int\bar{T}_{\alpha}(x,y,t)d\nu(\alpha)
\end{equation}
In other words 
\begin{equation}
\bar{T}(x,\xi,t)=K(\xi,t)\hat{T}\big|_{t=0}(x,\xi)
\end{equation}
where $K=\int K_{\alpha} d\nu(\alpha)$ and $K_{\alpha}$ is the heat kernel for a heat equation with coeffecient $D\alpha t^{\epsilon/2}$.
It is easy to see that the nonlocal equation for $\bar{T}$ is a special case of 
the nonlocal effective equation in Prop \ref{prop:nonlocal_effPDE} in 
Section \ref{sec:Effective-SPDE}. 


\begin{rem}\label{steady_sum}
The full pictures for the RG flow in the steady case can be summarized as follows. 
Let $\mathcal{E}$ be the space of all stochastic equations of the form
\begin{equation}
\frac{\partial T}{\partial t} = F(T,v)
\end{equation}
where $v$ is a Gaussian noise. The behavior of RG flow depends on the spectrum of $v$. The {\it local} equations form a subspace $\mathcal{L}\subset\mathcal{E}$. In $\mathcal{L}$ there is a one-dimensional subspace $\mathcal{A}\subset\mathcal{L}$ parametrized by $\nu_0$ whose elements are SPDE's of the form (\ref{eq:AMmodel}). Our initial data of the RG flow is always taken from $\mathcal{A}$. As will be seen from Prop \ref{prop:nonlocal_effPDE} in Section \ref{sec:Effective-SPDE}, the RG flow starting from $\mathcal{A}\subset\mathcal{L}$ will immediately exit $\mathcal{L}$. In the first two regimes discussed above, the RG flow eventually converges to a fixed point in $\mathcal{L}$, but in the nonlocal regime, the RG flow converges to a fixed point outside $\mathcal{L}$. Furthermore, the three regimes have another interesting difference on the behavior of the RG dynamics: 
for the mean field and nonlocal regimes,
different points of $\mathcal{A}$ belong to the basins of attraction of different 
fixed points regimes since the fixed point that the RG flow converges to 
depends on $\nu_0$, while for the hyperscaling regime all points of $\mathcal{A}$ 
are in the same basin of attraction (i.e. the same "universality class").
\end{rem}

\section{The unsteady case}

\subsection{The Polchinski equation}

As in the steady case, we decompose the Gaussian field 
\begin{equation}
v(x,t)=v_{<}(x,t)+v_{>}(x,t)
\end{equation}
where 
\begin{equation}
\left\langle \left|\hat{v}_{>}(k,\omega)\right|^{2}\right\rangle =\sqrt{2\pi}1_{e^{-l}\leq|k|\leq1}|k|^{1-\epsilon}\frac{|k|^{z}}{|k|^{2z}+\omega^{2}}
\end{equation}
\begin{equation}
\left\langle \left|\hat{v}_{<}(k,\omega)\right|^{2}\right\rangle =\sqrt{2\pi}1_{\delta\leq|k|\leq e^{-l}}|k|^{1-\epsilon}\frac{|k|^{z}}{|k|^{2z}+\omega^{2}}
\end{equation}
Repeating the proof of Prop \ref{prop:pol}, we obtain the analogous Polchinski type of
equation for $\bar{T}_{l}(x,\xi,t,v_{<})$ 
\begin{equation}
\frac{\partial\bar{T}_{l}(x,\xi,t;v_{<})}{\partial l}=\int\int\frac{\partial}{\partial l}C_{l}(x^{\prime}-y^{\prime},t^{\prime}-r^{\prime})\frac{\delta^{2}\bar{T}_{l}(x,\xi,t;v_{<})}{\delta v_{<}(x^{\prime},t^{\prime})\delta v_{<}(y^{\prime},r^{\prime})}dx^{\prime}dy^{\prime}dt^{\prime}dr^{\prime}\label{eq:Pol_unsteady}
\end{equation}
where
\begin{equation}
C_{l}(z, \tau)=\int_{-\infty}^{\infty}\int_{e^{-l}}^{1}e^{izk+i\tau\omega}
\left|k\right|^{1-\epsilon}\frac{|k|^{z}}{|k|^{2z}+\omega^{2}}dkd\omega
\end{equation}
The initial condition at $l=0$ is
\begin{equation}
\bar{T}_{0}(x,\xi,t;v)=\hat{T}(x,\xi,t)=\mathbb{E}\bigg[e^{-\frac{\nu_{0}}{2}\xi^{2}t}e^{-i\xi\int_{0}^{t}v(x+\sqrt{\nu_{0}}B_{s},t-s)ds}T\big|_{t=0}(x+\sqrt{\nu_{0}}B_{t},\xi)\bigg]\label{eq:init_FK-1-1}
\end{equation}
Following the procedure in the steady case, we write the system of
equations for $U_{n}, n \ge 0$.  Again we find that $U_{n}=0$ for $n\geq2$ and
\begin{equation}
U_{1}=-i\xi\int_{0}^{t}v(x+\sqrt{\nu_{0}}B_{s},t-s)ds
\end{equation}
for all $l\geq0$. The RG flow for $U_{0}$ is
\begin{equation}
\begin{aligned} & \frac{\partial U_{0}}{\partial l}=-\xi^{2}\int\int\int\int\frac{\partial}{\partial l}C_{l}(x^{\prime}-y^{\prime},t^{\prime}-r^{\prime})\\
 & \qquad\frac{\delta\int_{0}^{t}v_{<}(x+\sqrt{\nu_{0}}B_{s},t-s)ds}{\delta v_{<}(x^{\prime},t^{\prime})}\frac{\delta\int_{0}^{t}v_{<}(x+\sqrt{\nu_{0}}B_{s},t-s)ds}{\delta v_{<}(y^{\prime},r^{\prime})}dx^{\prime}dy^{\prime}dt^{\prime}dr^{\prime}
\end{aligned}
\label{eq:dUdllong-1}
\end{equation}
i.e.
\begin{equation}
\frac{\partial U_{0}}{\partial l}=-\xi^{2}\int_{0}^{t}\int_{0}^{t}\frac{\partial}{\partial l}C_{l}(B_{s}-B_{s^{\prime}},s-s^{\prime})dsds^{\prime}\label{eq:dUdlshort-1}
\end{equation}

In order to find the rescaling exponent for $v$ we identify the free
propagator in the unsteady case:
\begin{equation}
\int\int v(-i\partial_{x})^{\epsilon-1-z}((-i\partial_{x})^{2z}+(-i\partial_{t})^{2})vdxdt=\int\int_{\delta\leq|k|\leq1}|k|^{\epsilon-1-z}(|k|^{2z}+\omega^{2})|\hat{v}|^{2}dkd\omega
\end{equation}
There are two terms. We will discuss different cases in which different
term dominates. In the integration step, the two terms together plays
the role of the free propagator. In the rescaling step, since in
general the two terms have different dimensions, we have to rescale
in such a way that only one of the two terms is invariant and the other term damps out.
For this reason we define the rescaled velocity with scaling exponent
of $v$ yet to be found: 
\begin{equation}
V_{1}(x,\xi,t;v,B)=U_{1}(e^{l}x,e^{-l}\xi,e^{\alpha l}t;e^{[v]l}v_{<}(e^{-l}\cdot,e^{-\alpha l}t),B)
\end{equation}
\begin{equation}
V_{0}(x,\xi,t;B)=U_{0}(e^{l}x,e^{-l}\xi,e^{\alpha l}t;B)+\frac{\nu_{0}}{2}e^{(\alpha-2)l}\xi^{2}t\label{eq:defV0-1}
\end{equation}
The exact RG flow with rescaling incorporated is then defined by:
\begin{equation}
\begin{cases}
\frac{\partial V_{1}}{\partial l}=(\alpha-1+[v])V_{1}+(\frac{\alpha}{2}-1)\int\frac{\partial V_{1}}{\partial B_{s}}B_{s}ds\\
\frac{\partial V_{0}}{\partial l}=(2\alpha-2)V_{0}-e^{(2\alpha-2)l}\xi^{2}\int_{0}^{t}\int_{0}^{t}\frac{\partial}{\partial l}\bigg[C_{l}\left((B_{s}-B_{s^{\prime}})e^{\frac{\alpha}{2}l},s-s^{\prime}\right)\bigg]dsds^{\prime}
\end{cases}\label{eq:U0U1-1}
\end{equation}
with initial condition at $l=0$ givin by 
\begin{equation}
V_{1}(l=0)=-i\xi\int_{0}^{t}v(x+\sqrt{\nu_{0}}B_{s},t-s)ds\qquad V_{0}(l=0)=0
\end{equation}

\subsection{Fixed points for the unsteady case}

\subsubsection{Regime I (mean field) $\{\epsilon<0,z\geq2\}\cup\{\epsilon<2-z,0<z<2\}$}

With diffusive scaling, i.e. $\alpha=2$, we get
\begin{equation}
\frac{\partial V_{1}}{\partial l}=(1+[v])V_{1}\label{eq:RegIV1}
\end{equation}
There are two separate cases to be handled. In the case $z\geq2$, $\omega^{2}$
dominates $k^{2z}+\omega^{2}$, therefore $[v]$ is determined by requiring that 
\begin{equation}
\int\int v(-i\partial_{x})^{\epsilon-1-z}(-i\partial_{t})^{2}vdxdt
\end{equation}
be invariant, which implies
\begin{equation}
[v]=\frac{\epsilon-z}{2}
\end{equation}
It is easy to see that $1+[v]<0$ if $\epsilon<0$, hence from (\ref{eq:RegIV1}),
$V_{1}\rightarrow0$. 

In the case $0<z<2$, $k^{2z}$ dominates $k^{2z}+\omega^{2}$.  Therefore
$[v]$ is determined by requiring that 
\begin{equation}
\int\int v(-i\partial_{x})^{\epsilon-1-z}(-i\partial_{x})^{2z}vdxdt
\end{equation}
be invariant, which implies
\begin{equation}
[v]=\frac{\epsilon+z}{2}-2
\end{equation}
We still have $1+[v]<0$ if $\epsilon<2-z$. So by (\ref{eq:RegIV1}),
$V_{1}\rightarrow0$. 

As in the steady case, all dimensionless $V_{0}$ are fixed points.
To find out which specific fixed point the RG flow converges to
starting from $V_{1}(l=0),V_{0}(l=0)$, we solve  (\ref{eq:dUdlshort-1})
with $U_{0}(l=0)=-\frac{\nu_{0}}{2}\xi^{2}t$ to get,
\begin{equation}
U_{0}(l)=-\frac{\nu_{0}}{2}\xi^{2}t-\xi^{2}\int_{0}^{t}\int_{0}^{t}C_{l}(\sqrt{\nu_{0}}(B_{s}-B_{s^{\prime}}),s-s^{\prime})dsds^{\prime}\label{eq:ergodU-1}
\end{equation}
Therefore
\begin{equation}
V_{0}(l)=-e^{-2l}\xi^{2}\int_{0}^{e^{2l}t}\int_{0}^{e^{2l}t}C_{l}(\sqrt{\nu_{0}}(B_{s}-B_{s^{\prime}}),s-s^{\prime})dsds^{\prime}\label{eq:ergod-1}
\end{equation}
As in \cite{Avellaneda_Mathematical_1990}, using ergodicity
arguments,  we see that the right hand side of the above equation goes to 
\begin{equation}
-D(\epsilon,z)=-\frac{2}{\pi}t\xi^{2}\int_{0}^{1}(\frac{\nu_{0}}{2}|k|^{2}+|k|^{z})^{-1}|k|^{1-\epsilon}dk
\end{equation}
as $l\rightarrow\infty$. 

Therefore, the effective equation is
\begin{equation}
\frac{\partial\bar{T}}{\partial t}=\frac{1}{2}\nu_{0}\Delta\bar{T}+D(\epsilon,z)\bar{T}_{yy}
\end{equation}

\subsubsection{Regime II: $2-z<\epsilon<4-2z$}

We look for fixed points with nonzero linear terms.  Assume that $\partial_{x}^{2z}$
is the dominant term in the expression $\partial_{x}^{2z}+\partial_{t}^{2}$. 
With the rescaling $x\rightarrow e^{l}x$,
$t\rightarrow e^{\alpha}t$, we rescale $v$ in such a way that the quadratic term
\begin{equation}
\int\int\left(v_{<}(x^{\prime})\partial_{x}^{\epsilon-1-z}\partial_{x}^{2z}v_{<}(x^{\prime})\right)dx^{\prime}dt
\end{equation}
is preserved, i.e. $[v]=(\epsilon+z-\alpha-2)l/2$:
\begin{equation}
v_{<}(e^{l}x,e^{\alpha l}t)\rightarrow e^{(\epsilon+z-\alpha-2)l/2}v(x,t)
\end{equation}
From (\ref{eq:U0U1-1}), $\partial_l V_{1}^{\star}=0$
can be guaranteed by $\alpha=[v]+1$ i.e. 
\begin{equation}
\alpha=4-\epsilon-z
\end{equation}
and 
\begin{equation}
\int\frac{\partial V_{1}^{\star}}{\partial B_{s}}B_{s}ds=0
\end{equation}
which implies that $V_{1}^{\star}$ does not depend on $B$. In fact
\begin{equation}
V_{1}^{\star}=-i\xi\int_{0}^{t}v(x,t-s)ds
\end{equation}

Now with $V_{1}^{\star}$ already found, we identify the constant
term in the fixed point, namely $V_{0}^{\star}$. From (\ref{eq:U0U1-1})
with $B=0$, we obtain
\begin{equation}
\begin{aligned}
\frac{\partial V_{0}}{\partial l}=(2\alpha-2)V_{0} & -  \frac{1}{\pi}e^{(2\alpha-2)l}\xi^{2}\cdot\\
& \cdot \int_{0}^{t}\int_{0}^{t}\frac{\partial}{\partial l}\int_{-\infty}^{\infty}\int_{e^{-l}}^{1}\left|k\right|^{1-\epsilon}e^{i\omega(s-s^{\prime})e^{\alpha l}}\frac{|k|^{z}}{\omega^{2}+|k|^{2z}}dkd\omega dsds^{\prime}
\end{aligned}
\end{equation}
Straightforward calculations give,
\begin{equation}
\begin{aligned}G(k,t;l) & :=\int_{0}^{t}\int_{0}^{t}\int_{-\infty}^{\infty}e^{i\omega(s-s^{\prime})e^{\alpha l}}\frac{|k|^{z}}{\omega^{2}+|k|^{2z}}d\omega dsds^{\prime}\\
 & =t^{2}\left[\frac{1}{|k|^{z}te^{\alpha l}}-\frac{1}{(|k|^{z}te^{\alpha l})^{2}}(1-e^{-|k|^{z}te^{\alpha l}})\right]
\end{aligned}
\label{eq:G}
\end{equation}
We have $|k|^{z}t^{-1}e^{\alpha l}G(k,t;l)\rightarrow1$ as $l\rightarrow\infty$.
Replacing $G(k,t;l)$ by $\frac{t}{|k|^{z}e^{\alpha l}}$, we get
\begin{equation}
\frac{\partial V_{0}}{\partial l}=(2\alpha-2)V_{0}-\frac{1}{\pi}e^{(2\alpha-2)l}\xi^{2}t\frac{\partial}{\partial l}\int_{e^{-l}}^{1}\left|k\right|^{1-\epsilon-z}e^{-\alpha l}dk\label{eq:replacing_G}
\end{equation}
The right hand side being equal to $0$ implies that
\begin{equation}
(2\alpha-2)V_{0}^{\star}-\frac{1}{\pi}e^{(2\alpha-2)l}\xi^{2}te^{-l(2-\epsilon-z)}e^{-\alpha l}-\frac{1}{\pi}e^{(2\alpha-2)l}\xi^{2}t\int_{e^{-l}}^{1}\left|k\right|^{1-\epsilon-z}(-\alpha)e^{-\alpha l}dk=0
\end{equation}
Observe that the third term divided by $(-\alpha)$ solves (\ref{eq:replacing_G}),
therefore we obtain the fixed point equation 
\begin{equation}
(2\alpha-2)V_{0}^{\star}-\frac{1}{\pi}e^{(2\alpha-2)l}\xi^{2}te^{-l(2-\epsilon-z)}e^{-\alpha l}-\alpha V_{0}^{\star}=0
\end{equation}
Using the scaling we found above $\alpha=4-\epsilon-z$, we obtain
\begin{equation}
(\alpha-2)V_{0}^{\star}-\frac{1}{\pi}\xi^{2}t=0
\end{equation}
namely, 
\begin{equation}
V_{0}^{\star}=-\frac{1}{\pi}\frac{\xi^{2}t}{\epsilon+z-2}=-\frac{1}{\pi}\xi^{2}t\int_{1}^{\infty}\left|k\right|^{1-\epsilon-z}dk
\end{equation}
Finally, the term $-\frac{\nu_{0}}{2}\xi^{2}t$ in $U_{0}$ goes to zero. The effective equation for this regime is
\begin{equation}
\frac{\partial \bar{T}}{\partial t} = \frac{1}{2\pi}\int_{1}^{\infty}\left|k\right|^{1-\epsilon-z}dk
\frac{\partial^2 \bar{T}}{\partial y^2}
\end{equation}

\subsubsection{Regime III: $\{4-2z<\epsilon<4,z<2\}\cup\{2<\epsilon<4,z\geq2\}$}

Next, assume that $\partial_{t}^{2}$ is dominant in the expression
 $\partial_{x}^{2z}+\partial_{t}^{2}$.  i.e. $\alpha<z$. 
Observe that $\frac{k^{z}}{k^{2z}+\omega^{2}}
\rightarrow\frac{k^{z}e^{-lz}}{k^{2z}e^{-lz}+\omega^{2}e^{-2\alpha l}}$
converges to $\delta(\omega)$ as $l\rightarrow\infty$, i.e. 
at the fixed point, the Gaussian field $v(k,\omega)=0$ unless $\omega=0$.
Let $\bar{v}(k)=\int v(k,t)dt$. The fixed point has the form
\begin{equation}
e^{-\int\left(\bar{v}(x^{\prime})\partial_{x}^{\epsilon-1}\bar{v}(x^{\prime})\right)dx^{\prime}-i\xi t\bar{v}(x)ds+U_{0}^{\star}}
\end{equation}
and $U_{0}^{\star}$ is thus the same as that of the hyperscaling regime for the steady case
\begin{equation}
U_{0}^{\star}=\xi^{2}t^{2}\int\left|k\right|^{1-\epsilon}\psi_{0}(|k|)dk
\end{equation}
and $\int\left(\bar{v}(x^{\prime})\partial_{x}^{\epsilon-1}\bar{v}(x^{\prime})\right)dx^{\prime}$
and $i\xi t\bar{v}(x)ds$ being both marginal gives
$\alpha=2-\epsilon/2$,
which is the same as that of the hyperscaling regime for the steady case. The effective equation for this regime is
\begin{equation}
\frac{\partial \bar{T}}{\partial t} = t
\left(\frac{1}{\pi}\int_{1}^{\infty}\left|k\right|^{1-\epsilon-z}dk\right)
\frac{\partial^2 \bar{T}}{\partial y^2}
\end{equation}

\subsubsection{Regime IV: $\{4-2z<\epsilon<2,1<z<2\}$}

Solving (\ref{eq:dUdlshort-1}) for $U_{0}$ gives
\begin{equation}
\begin{aligned}
U_{0}(l)=U_{0}(0)-\frac{1}{\pi}\xi^{2}t^{2}\int_{0}^{1}\int_{0}^{1} & \int_{-\infty}^{\infty}\int_{e^{-l}}^{1}e^{i(B_{s}-B_{s^{\prime}})\sqrt{t}k+i\omega(s-s^{\prime})t}\\
& \cdot\left|k\right|^{1-\epsilon}\frac{|k|^{z}}{\omega^{2}+|k|^{2z}}dkd\omega dsds^{\prime}
\end{aligned}
\end{equation}
with $U_{0}(0)=-\frac{\nu_{0}}{2}\xi^{2}t$. The last term under rescaling
$x\rightarrow e^{l}x$, $t\rightarrow e^{\alpha l}t$ becomes 
\begin{equation}
\begin{aligned}
&-\frac{1}{\pi}\xi^{2}t^{2+\frac{\epsilon-2}{z}}e^{((2-\frac{2-\epsilon}{z})\alpha-2)l}\\
&\qquad\cdot\int_{0}^{1}\int_{0}^{1}\int_{-\infty}^{\infty}\int_{e^{-l}}^{1}e^{i(B_{s}-B_{s^{\prime}})e^{\alpha l/2}\sqrt{t}k+i\omega(s-s^{\prime})te^{\alpha l}}\left|k\right|^{1-\epsilon}\frac{|k|^{z}}{\omega^{2}+|k|^{2z}}dkd\omega dsds^{\prime}
\end{aligned}
\end{equation}

We choose $\alpha$ so that $(2-\frac{2-\epsilon}{z})\alpha-2=0$ i.e. 
\begin{equation}
\alpha=\frac{2z}{2z+\epsilon-2}
\end{equation}
Following the same calculations that led to (\ref{eq:G}), we obtain
\begin{equation}
\begin{aligned} & \int_{0}^{1}\int_{0}^{1}\int_{-\infty}^{\infty}e^{i\omega(s-s^{\prime})te^{\alpha l}}\frac{|k|^{z}}{\omega^{2}+|k|^{2z}}d\omega dsds^{\prime}\\
= & \frac{1}{|k|^{z}te^{\alpha l}}-\frac{1}{(|k|^{z}te^{\alpha l})^{2}}(1-e^{-|k|^{z}te^{\alpha l}})
\end{aligned}
\end{equation}
Changing variable $k\rightarrow e^{-\alpha l/z}t^{-1/z}k$, we have
\begin{equation}
V_{0}(l)=-\frac{1}{\pi}\xi^{2}t^{2}e^{2(\alpha-1)l}\int_{e^{(\alpha/z-1)l}t^{1/z}}^{e^{\alpha l/z}t^{1/z}}e^{i(B_{s}-B_{s^{\prime}})e^{\alpha l/2}e^{-\alpha l/z}t^{1/2-1/z}k}\left|k\right|^{1-\epsilon}g(|k|^{z})dk
\end{equation}
where $g(k)=\frac{1}{k}-\frac{1}{k{}^{2}}(1-e^{-k})$. 
Since $z<2$,
the Brownian motion term goes to $0$ as $l\rightarrow\infty$. Therefore
\begin{equation}
V_{0}(l)\rightarrow-\frac{1}{\pi}\xi^{2}t^{2+\frac{\epsilon-2}{z}}\int_{0}^{\infty}\left|k\right|^{1-\epsilon}g(|k|^{z})dk=V_{0}^{\star}
\end{equation}

Notice that $U_{0}(0)=-\frac{\nu_{0}}{2}\xi^{2}t$ vanishes in the limit. The effective equation for this regime is
\begin{equation}
\frac{\partial \bar{T}}{\partial t} = \frac{1}{2\pi}(2+\frac{\epsilon-2}{z})t^{1+\frac{\epsilon-2}{z}}\left(\int_{0}^{\infty}\left|k\right|^{1-\epsilon}g(|k|^{z})dk\right)
\frac{\partial^2 \bar{T}}{\partial y^2}
\end{equation}

\subsubsection{Regime V (nonlocal regime): $\{0<\epsilon<2,z>2\}$}

This regime can be treated in essentially the same way as  the nonlocal regime
for the steady case.
The solution for $U_{0}$ is
\begin{equation}
U_{0}(l)=U_{0}(0)-\frac{1}{\pi}\xi^{2}t^{2}
\int_{0}^{1}\int_{0}^{1}\int_{e^{-l}}^{1}e^{i(B_{s}-B_{s^{\prime}})\sqrt{\nu_0 t}k}\left|k\right|^{1-\epsilon}e^{-|k|^{z}|s-s^{\prime}|t}dkdsds^{\prime}
\end{equation}
with $U_{0}(0)=-\frac{\nu_{0}}{2}\xi^{2}t$. The last term under the rescaling
$x\rightarrow e^{l}x$, $t\rightarrow e^{\alpha l}t$ and the change
of variable $k\rightarrow e^{-\alpha l/2}(\nu_0 t)^{-1/2}k$ becomes 
\begin{equation}
\begin{aligned} 
& -\frac{1}{\pi}e^{(\alpha+\alpha\epsilon/2-2)l}\xi^{2}t^{1+\epsilon/2}\\
&\qquad\cdot\int_{0}^{1}\int_{0}^{1}\int_{e^{(\alpha/2-1)l}\sqrt{t}}^{e^{\alpha l/2}\sqrt{t}}e^{i(B_{s}-B_{s^{\prime}})k}\left|k\right|^{1-\epsilon}e^{-|k|^{z}|s-s^{\prime}|t^{1-z/2}e^{(1-z/2)\alpha l}}dkdsds^{\prime}
\end{aligned}
\end{equation}
Since $z>2$, $1-z/2\rightarrow0$. Choosing $\alpha=\frac{2}{1+\epsilon/2}$,
we obtain a nontrivial fixed point as $l\rightarrow\infty$
\begin{equation}
V_{0}^{\star}=-\frac{1}{\pi}\xi^{2}\nu_0^{\epsilon/2-1}t^{1+\epsilon/2}
\int_{0}^{1}\int_{0}^{1}\int e^{i(B_{s}-B_{s^{\prime}})k}\left|k\right|^{1-\epsilon}dkdsds^{\prime}
\end{equation}
Notice that $U_{0}(0)=-\frac{\nu_{0}}{2}\xi^{2}t$ vanishes in the limit.

Define
\begin{equation}\label{eq:unsteady_alpha}
\nu(\alpha)=Prob\left(
-\frac{1}{\pi}\int_{0}^{1}\int_{0}^{1}\int e^{i(B_{s}-B_{s^{\prime}})k}\left|k\right|^{1-\epsilon}dkdsds^{\prime}\leq\alpha
\right)
\end{equation}
Then the averaged quantity $\bar{T}$ satisfies
\begin{equation}
\bar{T}(x,y,t)=\int\bar{T}_{\alpha}(x,y,t)d\nu(\alpha)
\end{equation}
where
\begin{equation}
\frac{\partial \bar{T}_{\alpha}}{\partial t} = D(\nu_0) \alpha t^{\epsilon/2}  \frac{\partial^2 \bar{T}_{\alpha}}{\partial y^2} 
\end{equation}
for some constant $D(\nu_0)$. 

Note that the effective model for the mean field and nonlocal regimes depends
on the value of $\nu_0$, which is not the case for the other regimes.

\begin{rem}
As discussed in Remark \ref{steady_sum}, we have a space of local stochastic PDEs which is a subspace of the set of all stochastic equations with a noise $v$: $\mathcal{L}\subset\mathcal{E}$, and our initial data of the RG flow is always taken from a one-dimensional space (parametrized by $\nu_0$) $\mathcal{A}\subset\mathcal{L}$. The fixed points we identified above are outside $\mathcal{L}$ for the
nonlocal regime and inside $\mathcal{L}$ for the other regimes. 
Points of $\mathcal{A}$ belong to basins of attraction of different fixed points for the mean field and the nonlocal regime, while for regimes II,III,IV all points of $\mathcal{A}$ are in the same basin of attraction.
\end{rem}

\section{The effective model at the intermediate scales \label{sec:Effective-SPDE}}

In this section we derive the effective stochastic PDEs for $\bar{T}_{l}$ at
arbitrary scale $l$. These equations can be viewed either as analogs of the
models of large-eddy simulation, or models that arise in the Mori-Zwanzig formalism.
We will see that the effective
SPDEs are generally nonlocal, containing kernels in a general form,
in contrast to the local models that are implicitly assumed 
in the Yakhot-Orszag approximate RG scheme \cite{Avellaneda_Approximate_1992}.

Recall that
\begin{equation}
\bar{T}_{l}(x,\xi,t;v_{<})=\langle \hat{T}(x,\xi,t)\rangle_{v_{>}}
\end{equation}

\begin{prop}\label{prop:nonlocal_effPDE}
$\bar{T}_{l}(x,\xi,t;v_{<})$ satisfies the following SPDE:
\begin{equation}
\begin{aligned} & \partial_{t}\bar{T}_{l}+iv_{<}(x,t)\xi\bar{T}_{l}=\frac{\nu_{0}}{2}\partial_{x}^{2}\bar{T}_{l}-\frac{1}{2}\nu_{0}\xi^{2}\bar{T}_{l}+\int_{\mathbb{R}}K_{l}(x,\tilde{x},\xi,t)\hat{T}\big|_{t=0}(\tilde{x},\xi)d\tilde{x}\end{aligned}
\end{equation}
where the kernel $K_{l}$ is a superposition
of kernels over the ensemble of Brownian bridge paths $\{\tilde{B}_{s}:s\in[0,T],\tilde{B}(0)=x,\tilde{B}(t)=\tilde{x}\}$.
\begin{equation}
\begin{aligned}K_{l}(x,\tilde{x},\xi,t)= & \frac{1}{\sqrt{2\pi\nu_{0}t}}e^{-\frac{(x-\tilde{x})^{2}}{2\nu_{0}t}}\mathbb{E}\bigg[e^{-\frac{\nu_{0}}{2}\xi^{2}t}e^{-i\xi\int_{0}^{t}v_{<}(\tilde{B}_{s})ds}\\
 & e^{-\frac{\xi^{2}}{2}\int_{0}^{t}\int_{0}^{t}R(\tilde{B}_{s}-\tilde{B}_{s^{\prime}},s-s^{\prime})dsds^{\prime}}\xi^{2}\int_{0}^{t}R(\tilde{B}_{s}-\tilde{x})ds\bigg]
\end{aligned}
\end{equation}
and
\begin{equation}
R(x,t)=\frac{1}{\pi}\int_{-\infty}^{\infty}\int_{e^{-l}}^{1}e^{ixk+it\omega}\left|k\right|^{1-\epsilon}\frac{|k|^{z}}{\omega^{2}+|k|^{2z}}dkd\omega
\end{equation}
\end{prop}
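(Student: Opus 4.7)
I plan to bypass the Polchinski flow and derive the equation directly from the Feynman--Kac representation \eqref{eq:init_FK-1-1}, combined with a Gaussian integration by parts in $v_>$. Splitting $v=v_<+v_>$ in the original Fourier--transformed equation
\[
\partial_t \hat T + i(v_<+v_>)\xi\,\hat T = \tfrac{\nu_0}{2}\partial_x^2 \hat T - \tfrac{\nu_0}{2}\xi^2 \hat T,
\]
and taking the partial expectation $\langle\cdot\rangle_{v_>}$ yields
\[
\partial_t \bar T_l + iv_<\xi\,\bar T_l - \tfrac{\nu_0}{2}\partial_x^2 \bar T_l + \tfrac{\nu_0}{2}\xi^2 \bar T_l = -i\xi\,\langle v_>(x,t)\,\hat T(x,\xi,t)\rangle_{v_>}.
\]
The entire problem reduces to identifying the right--hand side with the claimed integral against $\hat T|_{t=0}$.

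\textbf{Gaussian integration by parts.} Since $v_>$ is a centered Gaussian field with covariance $R$, Stein's identity gives
\[
\langle v_>(x,t)\,F(v_>)\rangle_{v_>} = \int R(x-x',t-t')\,\Big\langle \frac{\delta F}{\delta v_>(x',t')}\Big\rangle_{v_>} dx'\,dt'.
\]
I apply this to the Feynman--Kac integrand: the functional derivative of $\exp\bigl(-i\xi\int_0^t v(x+\sqrt{\nu_0}B_s,t-s)\,ds\bigr)$ with respect to $v_>(x',t')$ produces $-i\xi\int_0^t \delta(x+\sqrt{\nu_0}B_s-x')\delta(t-s-t')\,ds$, which upon pairing with $R$ collapses to $\int_0^t R(-\sqrt{\nu_0}B_s,s)\,ds$. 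The residual $v_>$--average of $e^{-i\xi\int_0^t v_>(x+\sqrt{\nu_0}B_s,t-s)ds}$ is then computed from the Gaussian characteristic function and contributes the Wick exponential $\exp\bigl(-\tfrac{\xi^2}{2}\iint R(\sqrt{\nu_0}(B_s-B_{s'}),s-s')\,ds\,ds'\bigr)$.

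\textbf{Brownian bridge reorganization.} The right--hand side now has the form $\mathbb E_B[\Phi(B)\,T_0(x+\sqrt{\nu_0}B_t,\xi)]$ for a known $B$--functional $\Phi$. Disintegrating Brownian motion with respect to its endpoint $\tilde x = x+\sqrt{\nu_0}B_t$,
\[
\mathbb E_B[\Phi(B)\,T_0(x+\sqrt{\nu_0}B_t,\xi)] = \int d\tilde x\;p_t(x,\tilde x)\,\mathbb E_{\tilde B}[\Phi(\tilde B)]\,T_0(\tilde x,\xi),
\]
with $p_t(x,\tilde x)=(2\pi\nu_0 t)^{-1/2}e^{-(x-\tilde x)^2/(2\nu_0 t)}$ and $\tilde B$ the Brownian bridge from $x$ to $\tilde x$ over $[0,t]$ with diffusion coefficient $\nu_0$. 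Reading off the integrand exhibits the kernel $K_l$ in the advertised form.

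\textbf{Main obstacle.} The bookkeeping is where the real work lies: one must carefully track the spacetime arguments of $R$ through the delta--function collapse---the time reversal $t-s$ inside $v$ is what produces the temporal differences $s-s'$ and the spatial differences involving $\tilde B_s$ in $K_l$---and verify that the prefactor $\xi^2$ arises correctly from the product $(-i\xi)\cdot(-i\xi)$ of the outer Stein coefficient and the inner functional derivative. A secondary, more pedestrian point is the rigorous justification of Stein's identity for the non--polynomial functional $\hat T$: this is handled by standard approximation, since the sharp cutoff $e^{-l}<|k|\le 1$ makes $v_>$ almost surely a smooth function of $(x,t)$ and $\hat T$ a bounded Fr\'echet--differentiable functional of it.
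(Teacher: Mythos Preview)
Your approach is correct and genuinely different from the paper's. The paper never averages the PDE or invokes Stein's identity; instead it works entirely from the explicit Feynman--Kac formula for $\bar T_l$ obtained after the Gaussian $v_>$--average, and computes $\tfrac{\nu_0}{2}\partial_x^2\bar T_l$ via the generator definition $\lim_{r\to 0}r^{-1}\mathbb{E}\big[\bar T_l(x+\sqrt{\nu_0}B_r)-\bar T_l(x)\big]$. Using the semigroup identity $B_r+B_\cdot\sim B_{r+\cdot}$, the shifted expectation is rewritten with time horizon $t+r$, and expanding the three resulting exponentials to first order in $r$ produces $\partial_t\bar T_l$, the advection and damping terms, and the nonlocal remainder; the kernel then arises from the boundary contribution $\lim_{r\to 0}r^{-1}\int_{[0,t+r]^2\setminus[0,t]^2}R\,ds\,ds'$. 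Your route---average first, then close $\langle v_>\hat T\rangle_{v_>}$ by Gaussian integration by parts---is the more ``Mori--Zwanzig'' derivation and is arguably cleaner conceptually.

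One point to watch: your Stein computation pairs $v_>(x,t)$ against the Feynman--Kac exponent and collapses to $\int_0^t R(-\sqrt{\nu_0}B_s,s)\,ds=\int_0^t R(\tilde B_s-\tilde B_0,\,s)\,ds$, i.e.\ the spatial argument is anchored at the \emph{starting} point $x=\tilde B_0$ of the bridge. The paper's generator computation naturally produces $\int_0^t R(\tilde B_s-\tilde B_t,\,s-t)\,ds$, anchored at the \emph{terminal} point $\tilde x=\tilde B_t$, which is the form stated in the Proposition. Since both routes compute the same uniquely determined nonlocal term and $\hat T|_{t=0}$ is arbitrary, the two bridge expectations must coincide as functions of $(x,\tilde x,\xi,t)$; but this is not an identity that falls out of a one--line bridge time--reversal (the $v_<$ factor breaks the naive symmetry), so if you want the kernel literally in the stated form you should either note this uniqueness argument explicitly or supply the path identity.
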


\begin{proof}
By the Feynman-Kac representation,
\begin{equation}
\hat{T}(x,\xi,t)=\mathbb{E}\bigg[e^{-\frac{\nu_{0}}{2}\xi^{2}t}e^{-i\xi\int_{0}^{t}v(x+\sqrt{\nu_{0}}B_{s},t-s)ds}\hat{T}\big|_{t=0}(x+\sqrt{\nu_{0}}B_{t},\xi)\bigg]
\end{equation}
where $\mathbb{E}$ is the expectation over Brownian motion $B$.
We average out $v_{>}$ and obtain
\begin{equation}
\begin{aligned} & \bar{T}_{l}(x,\xi,t;v_{<})\\
= & \mathbb{E}\bigg[e^{-\frac{\nu_{0}}{2}\xi^{2}t}e^{U_{1}(x,\xi,t)}e^{-\frac{\xi^{2}}{2}\int_{0}^{t}\int_{0}^{t}\left\langle v_{>}(\sqrt{\nu_{0}}(B_{s}-B_{s^{\prime}}),s-s^{\prime})v_{>}(0,0)\right\rangle dsds^{\prime}}\hat{T}\big|_{t=0}(x+\sqrt{\nu_{0}}B_{t},\xi)\bigg]\\
= & \mathbb{E}\bigg[e^{-\frac{\nu_{0}}{2}\xi^{2}t}e^{U_{1}(x,\xi,t)}e^{-\frac{\xi^{2}}{2}\int_{0}^{t}\int_{0}^{t}R_{l}(\sqrt{\nu_{0}}(B_{s}-B_{s^{\prime}}),s-s^{\prime})dsds^{\prime}}\hat{T}\big|_{t=0}(x+\sqrt{\nu_{0}}B_{t},\xi)\bigg]
\end{aligned}
\end{equation}
where $U_{1}(x,\xi,t)=-i\xi\int_{0}^{t}v_{<}(x+\sqrt{\nu_{0}}B_{s},t-s)ds$,
and
\begin{equation}
R_{l}(\sqrt{\nu_{0}}(B_{s}-B_{s^{\prime}}),s-s^{\prime})=\frac{1}{\pi}\int_{-\infty}^{\infty}\int_{e^{-l}}^{1}e^{i\sqrt{\nu_{0}}(B_{s}-B_{s^{\prime}})k+i(s-s^{\prime})\omega}\left|k\right|^{1-\epsilon}\frac{|k|^{z}}{\omega^{2}+|k|^{2z}}dkd\omega
\end{equation}

Now we derive the effective PDE for $\bar{T}_{l}$. Since the generator
of $B_{t}$ is $\partial_{x}^{2}$,
\begin{equation}
\begin{aligned} & \frac{\nu_{0}}{2}\partial_{x}^{2}\bar{T}_{l}=\lim_{r\rightarrow0}\frac{1}{r}\left\{ \mathbb{E}\left[\bar{T}_{l}(x+\sqrt{\nu_{0}}B_{r},\xi,t)-\bar{T}_{l}(x,\xi,t)\right]\right\} \\
= & \lim_{r\rightarrow0}\frac{1}{r}\bigg\{\mathbb{E}\bigg[e^{-\frac{\nu_{0}}{2}\xi^{2}t}e^{-i\xi\int_{0}^{t}v_{<}(x+\sqrt{\nu_{0}}B_{r}+\sqrt{\nu_{0}}B_{s},t-s)ds}\\
 & e^{-\frac{\xi^{2}}{2}\int_{0}^{t}\int_{0}^{t}R(\sqrt{\nu_{0}}(B_{s}-B_{s^{\prime}}),s-s^{\prime})dsds^{\prime}}\hat{T}\big|_{t=0}(x+\sqrt{\nu_{0}}B_{r}+\sqrt{\nu_{0}}B_{t},\xi)\bigg]-[r=0]\bigg\}
\end{aligned}
\end{equation}
where the term $[r=0]$ means the same as the first term except $r=0$.
Because $B_{r}+B_{t}\sim B_{r+t}$ in law,
\begin{equation}
\begin{aligned} & \frac{\nu_{0}}{2}\partial_{x}^{2}\bar{T}_{l}=\lim_{r\rightarrow0}\frac{1}{r}\bigg\{\mathbb{E}\bigg[e^{-\frac{\nu_{0}}{2}\xi^{2}t}e^{-i\xi\int_{r}^{t+r}v_{<}(x+\sqrt{\nu_{0}}B_{s},t-s)ds}\\
 & e^{-\frac{\xi^{2}}{2}\int_{r}^{t+r}\int_{r}^{t+r}R(\sqrt{\nu_{0}}(B_{s}-B_{s^{\prime}}),s-s^{\prime})dsds^{\prime}}\hat{T}\big|_{t=0}(x+\sqrt{\nu_{0}}B_{t+r},\xi)\bigg]-[r=0]\bigg\}\\
= & \lim_{r\rightarrow0}\frac{1}{r}\bigg\{\mathbb{E}\bigg[e^{-\frac{\nu_{0}}{2}\xi^{2}(t+r)}e^{-i\xi\int_{0}^{t+r}v_{<}(x+\sqrt{\nu_{0}}B_{s},t-s)ds}\left(e^{\frac{\nu_{0}}{2}\xi^{2}r+i\xi\int_{0}^{r}v_{<}(x+\sqrt{\nu_{0}}B_{s},t-s)ds}-1+1\right)\\
 & e^{-\frac{\xi^{2}}{2}\int_{0}^{t+r}\int_{0}^{t+r}R(\sqrt{\nu_{0}}(B_{s}-B_{s^{\prime}}),s-s^{\prime})dsds^{\prime}}\left(e^{\frac{\xi^{2}}{2}\int_{\Gamma}R(\sqrt{\nu_{0}}(B_{s}-B_{s^{\prime}}),s-s^{\prime})dsds^{\prime}}-1+1\right)\\
 & \hat{T}\big|_{t=0}(x+\sqrt{\nu_{0}}B_{t+r},\xi)\bigg]-[r=0]\bigg\}
\end{aligned}
\end{equation}
where
\begin{equation}
\Gamma=\{(s,s^{\prime})\in[0,t+r]^{2}\backslash[0,t]^{2}\}
\end{equation}
Therefore
\begin{equation}
\begin{aligned} & \frac{\nu_{0}}{2}\partial_{x}^{2}\bar{T}_{l}=\partial_{t}\bar{T}_{l}+\left(\frac{1}{2}\nu_{0}\xi^{2}+iv_{<}(x,t)\xi\right)\bar{T}_{l}\\
 & +\mathbb{E}\bigg[e^{-\frac{\nu_{0}}{2}\xi^{2}t}e^{-i\xi\int_{0}^{t}v_{<}(x+\sqrt{\nu_{0}}B_{s},t-s)ds}e^{-\frac{\xi^{2}}{2}\int_{0}^{t}\int_{0}^{t}R(\sqrt{\nu_{0}}(B_{s}-B_{s^{\prime}}),s-s^{\prime})dsds^{\prime}}\\
 & \lim_{r\rightarrow0}\frac{1}{r}\left(\frac{\xi^{2}}{2}\int_{\Gamma}R(\sqrt{\nu_{0}}(B_{s}-B_{s^{\prime}}),s-s^{\prime})dsds^{\prime}\right)\hat{T}\big|_{t=0}(x+\sqrt{\nu_{0}}B_{t},\xi)\bigg]
\end{aligned}
\end{equation}
The limit of $\frac{1}{r}\int_{\Gamma}$ as $r\rightarrow0$ with
$\Gamma$ being the infinitesimally thin region defined above is the
line integral times $2$, thus 
\begin{equation}
\begin{aligned} & \frac{\nu_{0}}{2}\partial_{x}^{2}\bar{T}_{l}=\partial_{t}\bar{T}_{l}+\left(\frac{1}{2}\nu_{0}\xi^{2}+iv_{<}(x,t)\xi\right)\bar{T}_{l}\\
 & +\mathbb{E}\bigg[e^{-\frac{\nu_{0}}{2}\xi^{2}t}e^{-i\xi\int_{0}^{t}v_{<}(x+\sqrt{\nu_{0}}B_{s},t-s)ds}e^{-\frac{\xi^{2}}{2}\int_{0}^{t}\int_{0}^{t}R(\sqrt{\nu_{0}}(B_{s}-B_{s^{\prime}}),s-s^{\prime})dsds^{\prime}}\\
 & \xi^{2}\int_{0}^{t}R(\sqrt{\nu_{0}}(B_{s}-B_{t}),s-t)ds\cdot\hat{T}\big|_{t=0}(x+\sqrt{\nu_{0}}B_{t},\xi)\bigg]\\
= & \partial_{t}\bar{T}_{l}+\left(\frac{1}{2}\nu_{0}\xi^{2}+iv_{<}(x)\xi\right)\bar{T}_{l}+\int_{\mathbb{R}}K_{l}(x,\tilde{x},\xi,t)\hat{T}\big|_{t=0}(\tilde{x},\xi)d\tilde{x}
\end{aligned}
\end{equation}
where 
\begin{equation}
\begin{aligned}K_{l}(x,\tilde{x},\xi,t)= & \frac{1}{\sqrt{2\pi\nu_{0}t}}e^{-\frac{(x-\tilde{x})^{2}}{2\nu_{0}t}}\mathbb{E}\bigg[e^{-\frac{\nu_{0}}{2}\xi^{2}t}e^{-i\xi\int_{0}^{t}v_{<}(\tilde{B}_{s},t-s)ds}\\
 & e^{-\frac{\xi^{2}}{2}\int_{0}^{t}\int_{0}^{t}R(\tilde{B}_{s}-\tilde{B}_{s^{\prime}},s-s^{\prime})dsds^{\prime}}\xi^{2}\int_{0}^{t}R(\tilde{B}_{s}-\tilde{x})ds\bigg]
\end{aligned}
\end{equation}
and $\tilde{B}$ is a Brownian bridge on $[0,t]$ with variance $\nu_{0}$
and $\tilde{B}(0)=x$, $\tilde{B}(t)=\tilde{x}$. 
\end{proof}

{\bf Acknowledgement}.  The work presented here is supported in part by the
DOE grant DE-SC0009248 and the ONR grant N00014-13-1-0338.

\bibliographystyle{plain}
\bibliography{shear}

\end{document}